\documentclass[conference,letterpaper]{IEEEtran}
\usepackage{balance}
\usepackage{amssymb,amsthm,amsmath,epsfig,latexsym,graphicx,bm,nccmath}
\allowdisplaybreaks[4] 
\usepackage{mathtools}

\usepackage{epstopdf}

\usepackage{booktabs}
\usepackage{tabularx}

\usepackage{enumerate}
\usepackage{cite}
\usepackage{amsfonts}
\usepackage[ruled, vlined, linesnumbered, commentsnumbered, longend]{algorithm2e}
\usepackage{textcomp}
\usepackage{xcolor}
\def\BibTeX{{\rm B\kern-.05em{\sc i\kern-.025em b}\kern-.08em
		T\kern-.1667em\lower.7ex\hbox{E}\kern-.125emX}}
\usepackage[utf8x]{inputenc}
\usepackage{etoolbox}
\usepackage{multirow}

\newtheorem{proposition}{Proposition}

\usepackage[skip=0pt]{caption}
\usepackage[font=footnotesize,skip=0pt]{subcaption}
\usepackage{easyReview} 

\begin{document}
	\bstctlcite{IEEEexample:BSTcontrol}
	\title{Profit-Driven Pricing and SLA-Aware Reserve Sizing for Multi-Tenant Satellite O-RAN Services}
	
	\author{
		\IEEEauthorblockN{Manobendu Sarker\IEEEauthorrefmark{1},  Gunes Karabulut Kurt\IEEEauthorrefmark{1}, and Wael Jaafar\IEEEauthorrefmark{2}} 
		\IEEEauthorblockA{\IEEEauthorrefmark{1}{Poly-Grames Research Center, Department of Electrical Engineering}, Polytechnique Montr\'{e}al,  Canada}
		
		\IEEEauthorblockA{\IEEEauthorrefmark{2}{Department of Software and IT Engineering}, \'{E}cole de Technologie Sup\'{e}rieure, Canada}
		
		\IEEEauthorblockA{
			\IEEEauthorrefmark{1}\{manobendu.sarker, gunes.kurt\}@polymtl.ca, \IEEEauthorrefmark{2}wael.jaafar@etsmtl.ca
		}
	}
	
	\maketitle
	
	\begin{abstract}
		This paper studies a multi-tenant resource allocation problem in a satellite open radio access network (O-RAN) wholesale setting, where heterogeneous traffic classes share a time-varying limited satellite capacity over a finite planning horizon. We formulate a joint pricing and reserve allocation problem from a service provider perspective, where tenant-specific demand exhibits price elasticity and stochastic service requirements subject to strict service-level agreement (SLA) constraints, leading to a coupled economic and reliability-driven bottleneck. A deterministic reformulation is adopted to approximate probabilistic SLA requirements through tractable margin constraints, enabling coordinated control of horizon-wide contract prices and time-varying reserves. The resulting problem is non-convex due to interdependent decisions across tenants, time windows, and service classes. To address this, an alternating optimization (AO) scheme is developed separating pricing and allocation decisions while preserving feasibility and SLA guarantees. Numerical results show that the proposed method achieves near-optimal profit within approximately $1\%$ of a global benchmark, while reducing runtime by up to $22\times$. In contrast, considered baseline schemes incur profit losses exceeding $15\%$ or fail to satisfy SLA constraints. The proposed approach consistently maintains non-positive empirical SLA gaps and achieves up to $30\%$ higher resource utilization than a price-optimization baseline without adaptive reserve control. These results demonstrate that joint economic and resource control enables the provider to efficiently exploit scarce satellite network capacity with reliable service delivery and scalable computation.
	\end{abstract}
	
	\begingroup
	\renewcommand\thefootnote{}\footnote{
		The authors would like to thank the Institut de valorisation des donn\'{e}es (IVADO) for their financial support.
	}
	\addtocounter{footnote}{-1}
	\endgroup
	\vspace{-6mm}
	\section{Introduction}
	
	The evolution of sixth-generation (6G) networks relies on the integration of low Earth orbit (LEO) and very low Earth orbit (VLEO) constellations to complement terrestrial infrastructure and ensure global, ubiquitous coverage \cite{Lee2025, Benchoubane2025}. Within this framework, a multi-tenant satellite open radio access network (O-RAN) wholesale model enables a satellite infrastructure provider (SInP) to lease capacity to multiple telecom operators (TOs) \cite{Baena2025, Lee2025}. TOs support heterogeneous service portfolios composed mainly of enhanced mobile broadband (eMBB) and ultra-reliable low latency communication (URLLC) traffic, each characterized by distinct service-level agreement (SLA) requirements \cite{Motalleb2023, Raftopoulos2024}. A fundamental tension arises because under-reserving resources increases SLA violation risk and associated penalties, whereas over-reserving resources reduces the SInP's profitability and leads to inefficient capacity utilization \cite{Habibi2018, Chiang2014}. This raises a key question: \textit{how should a satellite O-RAN provider jointly set tenant-specific reservation prices and allocate capacity to satisfy SLA constraints while maximizing profit?}
	
	Addressing this question requires a coordinated treatment of pricing and resource allocation, which is not provided by existing approaches. Indeed, existing work addressing pricing and resource management in integrated networks can be broadly classified into four categories. First, technical slicing solutions for non-terrestrial networks (NTNs) provide robust allocation mechanisms but do not incorporate wholesale pricing \cite{He2024, Lee2025, Benchoubane2025}. Second, market-driven auction models determine prices through strategic bidding but lack the stability required for long-term wholesale contracts \cite{Dieye2020,Jiang2023}. Third, terrestrial slicing frameworks optimize resource orchestration without accounting for NTN-specific constraints and mixed service portfolios \cite{Sciancalepore2017, Habibi2018}. Finally, current O-RAN architectures focus on real-time radio resource control rather than provider-side economic coordination \cite{Raftopoulos2024}. Consequently, none of these approaches jointly address contract-based pricing and reserve sizing under SLA risk and limited satellite capacity.
	
	Building on this, we formulate a joint pricing and reserve allocation problem from the perspective of the SInP, capturing price-dependent demand and stochastic service requirements under SLA constraints. To the best of our knowledge, the joint design of contract-based wholesale pricing and time-varying reserve allocation under SLA risk in satellite O-RAN systems has not been investigated. This formulation explicitly captures the coupling between price-dependent demand and reliability constraints, thus enabling a provider-centric perspective where economic decisions directly influence service feasibility and resource utilization. 
	
	To efficiently address the resulting coupled decision structure, we employ an alternating optimization (AO) approach that jointly captures the interaction between economic control and resource allocation. We consider a finite-horizon model in which wholesale prices are fixed over the contract duration, while admitted reserves are adapted across time windows to account for time-varying capacity and stochastic demand. The main contributions of this paper are summarized as follows:
	\begin{enumerate}
		\item We develop a contract-based joint pricing and reserve allocation model for multi-tenant satellite O-RAN systems that captures heterogeneous service portfolios and SLA-driven reliability risk under a shared and time-varying capacity budget.
		
		\item We propose an AO framework that decomposes pricing and reserve allocation into tractable subproblems while preserving their coupling through demand elasticity and SLA constraints, resulting in improved scalability compared to globally coupled optimization methods.
		
		\item We validate the proposed approach through numerical evaluations, demonstrating near-optimal profit performance, consistent SLA satisfaction, and efficient resource utilization across varying capacity regimes, thereby confirming its suitability for SInP-side orchestration under satellite capacity constraints.
	\end{enumerate}
	\vspace{-3mm}
	
	\section{System Model}
	\label{sec:system}
	
	\subsection{Network Architecture and Resource Abstraction}
	We consider a multi-tenant satellite O-RAN wholesale framework in which an
	infrastructure provider, SInP, offers reservation-based satellite services to
	mobile network operators \cite{Lee2025,Khodashenas2016}. The SInP operates a
	centralized business-control layer that abstracts PHY scheduling and
	packet-level coordination, consistent with higher-layer O-RAN and NTN
	orchestration architectures \cite{Baena2025,Campana2023}. The operational
	horizon is divided into time intervals $t\in\mathcal{T}=\{1,2,\ldots,T\}$,
	each of duration $\Delta t$. Time-varying beam visibility and orbital
	dynamics within each window are aggregated into a conservative,
	time-dependent service budget $C_{\mathrm{eff}}^{(t)}$. This budget represents a lower bound on
	available throughput that accounts for sub-window beam transitions and link
	degradation, thereby capturing the limited and intermittent availability of satellites without requiring explicit beam-level or millisecond-level indices
	\cite{Mahyoub2025,Lee2025,He2024}. 
	
	We consider a set $\mathcal{I}=\{1,2,\ldots,I\}$ of TO tenants
	competing for the shared satellite service budget, each with a service
	portfolio that requires a reserve $r_i^{(t)}$ within the satellite service budget, $\forall i \in \mathcal{I}$ and $t \in \mathcal{T}$. The aggregate admitted reserves in every time window $t$ must
	satisfy
	\begin{equation}
		\sum_{i\in\mathcal{I}} r_i^{(t)} \le C_{\mathrm{eff}}^{(t)},
		\qquad \forall t\in\mathcal{T}.
		\label{eq:capacity_operator}
	\end{equation}
	For tractability, the model does not impose inter-window ramping or
	handover-transition costs on $r_i^{(t)}$, thus each planning window is optimized
	using updated window-level service conditions.
	
	\vspace{-2mm}
	\subsection{Physical Layer and Delay Abstraction}
	\label{subsec:phy_abstraction}
	
	In LEO satellite systems, a satellite serves a given terrestrial area only
	during short visibility windows with time-varying propagation conditions,
	including line-of-sight (LoS)/Non-LoS (NLoS) transitions, Doppler shifts, and distance-dependent path
	loss \cite{Mahyoub2025,3GPP_TR38811}. The user--satellite--gateway service
	path also introduces non-negligible propagation delay, particularly when
	processing is located at the gateway rather than onboard
	\cite{Baena2025,Lee2025}.
	
	To reconcile physical-layer dynamics with the planning horizon, end-to-end (E2E)
	propagation delay is not explicitly modeled. Instead, it is captured indirectly
	through stricter SLA conservatism for delay-sensitive services. We assume the baseline service path meets nominal URLLC propagation-delay requirements, so
	the tighter reliability tolerance protects against queueing and
	service-related degradation rather than compensating for propagation physics
	itself. Accordingly, the reliability tolerance assigned to URLLC-like traffic
	is set significantly tighter than that of eMBB-like traffic (as will be discussed in 
	Section~\ref{subsec:operator_portfolios}).
	
	\subsection{Operator Service Portfolios and Demand Uncertainty}
	\label{subsec:operator_portfolios}
	Each TO tenant carries a mixed service portfolio composed of a URLLC-like class
	$u$ and an eMBB-like class $b$, reflecting the heterogeneous requirements of
	O-RAN slicing \cite{Motalleb2023}. Let $\mathcal{K}=\{u,b\}$, and
	the portfolio class weights $\alpha_{i,k}$, $\forall k\in\mathcal{K}$, satisfy
	$\sum_{k\in\mathcal{K}}\alpha_{i,k}=1$, $\forall i \in \mathcal{I}$, and are held constant
	over the operational horizon.
	
	During each time window $t$, the SInP sets a TO-specific wholesale price
	$p_i^{(t)}$, $\forall i \in \mathcal{I}$, which induces a price-sensitive reservation request. We adopt the
	linear price-response model \cite{Chiang2014}
	\begin{equation}
		q_i^{(t)}=\bar{q}_i^{(t)}-\kappa_i p_i^{(t)},
		\qquad \forall i\in\mathcal{I},\; \forall t\in\mathcal{T},
		\label{eq:request_operator}
	\end{equation}
	where $\bar{q}_i^{(t)}>0$ is the baseline request at zero price and
	$\kappa_i>0$ is the tenant-specific price-sensitivity coefficient. This
	captures the intuitive trade-off that higher prices reduce requested volume,
	while remaining simple enough for SInP-side planning
	\cite{Khodashenas2016,Chiang2014}. Then the  admitted tenant-level reserve
	$r_i^{(t)}$ satisfies $0\le r_i^{(t)}\le q_i^{(t)}$,
	$\forall i\in\mathcal{I},\; \forall t\in\mathcal{T}$.
	
	Let $D_{i,k}^{(t)}$ be the realized instantaneous demand of tenant $i$ for
	service class $k$ in window $t$, and
	$D_i^{(t)}=\sum_{k\in\mathcal{K}} D_{i,k}^{(t)}$. We model demand as
	stochastic with independent normal marginals as follows \cite{He2024}:
	\begin{equation}
		D_{i,k}^{(t)}\sim\mathcal{N}\!\big(\mu_{i,k}^{(t)},\sigma_{i,k}^2\big),
		\qquad \forall i\in\mathcal{I},\; \forall k\in\mathcal{K},\; \forall t\in\mathcal{T},
		\label{eq:gaussian_operator}
	\end{equation}
	where
	\begin{equation}
		\mu_{i,k}^{(t)}=\beta_{i,k}\alpha_{i,k}q_i^{(t)},
		\qquad 0<\beta_{i,k}\le 1,
		\label{eq:mean_operator}
	\end{equation}
	and $\beta_{i,k}$ is the mean instantaneous consumption ratio relative to the
	requested envelope \cite{Habibi2018,Chiang2014}. We assume
	$\mu_{i,k}^{(t)}/\sigma_{i,k}$ is sufficiently large that the probability of
	negative demand realizations is negligible in the considered operating regime.
	
	The class-specific reserve available to tenant $i$ in window $t$ is
	approximated as $\alpha_{i,k}r_i^{(t)}$, i.e., the SInP controls only
	$r_i^{(t)}$ while the intra-tenant split follows the declared portfolio. An
	SLA violation for class $k$ occurs when
	$D_{i,k}^{(t)}>\alpha_{i,k}r_i^{(t)}$, and the violation probability is
	\begin{equation}
		P_{i,k}^{(t)} \triangleq
		\Pr\!\big(D_{i,k}^{(t)}>\alpha_{i,k}r_i^{(t)}\big),
		\ \forall i\in\mathcal{I},\; \forall k\in\mathcal{K},\; \forall t\in\mathcal{T}.
		\label{eq:sla_operator}
	\end{equation}
	The system enforces class-specific reliability targets
	\begin{equation}
		P_{i,k}^{(t)} \le \varepsilon_{i,k},
		\qquad \forall i\in\mathcal{I},\; \forall k\in\mathcal{K},\; \forall t\in\mathcal{T},
		\label{eq:sla_target}
	\end{equation}
	where $\varepsilon_{i,u}\ll\varepsilon_{i,b}$ reflects the stricter
	reliability requirement of URLLC compared to eMBB
	\cite{Motalleb2023,Raftopoulos2024}.
	
	\subsection{Economic Framework and Inter-Tenant Coupling}
	\label{subsec:economic_operator}
	The SInP trades reservation revenue against reserve-maintenance cost and
	expected shortfall penalties, consistent with SLA-aware profit models
	\cite{Khodashenas2016,Chiang2014,Habibi2018}. Let $c$ be the unit cost of
	maintaining reserved capacity per window, $w_i$ the tenant-specific penalty
	coefficient, and $\underline r_i$ the minimum per-tenant reservation floor
	that prevents starvation. To capture the higher criticality of URLLC
	shortfalls, we introduce class-specific severity weights $\gamma_k$ with
	$\gamma_u>\gamma_b$, and define the portfolio shortfall of tenant $i$ in
	window $t$ as
	\begin{equation}
		S_i^{(t)}=\sum_{k\in\mathcal{K}} \gamma_k
		\max\{0,\; D_{i,k}^{(t)}-\alpha_{i,k}r_i^{(t)}\}.
		\label{eq:shortfall_operator}
	\end{equation}
	The reliability constraint \eqref{eq:sla_target} and the expected shortfall
	$\mathbb{E}[S_i^{(t)}]$ jointly capture the operational risk, where the latter
	quantifies the expected monetary loss from unmet demands
	\cite{Chiang2014,Habibi2018}.
	
	Since TO tenants share the service budget $C_{\mathrm{eff}}^{(t)}$ in each
	window, increasing the reserve of one TO reduces the capacity available
	to others \cite{He2024,Dieye2020}. This shared-capacity coupling creates
	a competitive environment that requires joint optimization of pricing and
	reserve sizing over the horizon to maintain SInP profitability and
	inter-tenant fairness \cite{Lee2025,Jiang2023,Khodashenas2016}.
	\vspace{-2mm}
	\section{Problem Formulation}
	\label{sec:problem}
	
	We assume that the system parameters are chosen such that the feasible set
	of the problem defined below is nonempty, i.e., the per-tenant SLA margins
	and service floors are jointly satisfiable within the satellite capacity
	budget over the operating horizon.\footnote{If a given parameter
		combination renders the problem infeasible, the provider must relax either
		the capacity target or the reliability constraints before accepting the
		slice requests.}
	Although the system model admits window-dependent price signals
	\(p_i^{(t)}\), we adopt a horizon-wide wholesale contract price \(p_i\) for
	each tenant to avoid unrealistic price volatility across planning windows,
	while allowing the admitted reserve \(r_i^{(t)}\) to vary over
	time.\footnote{The formulation is therefore window-coupled only through the
		horizon-wide prices \(\{p_i\}_{i\in\mathcal{I}}\). No ramping or switching
		cost is imposed on the reserve sequence
		\(\{r_i^{(t)}\}_{t\in\mathcal{T}}\) in the present paper.}
	Under this abstraction the SInP determines the operator-specific prices
	\(\{p_i\}_{i\in\mathcal{I}}\) and the admitted reserves
	\(\{r_i^{(t)}\}_{i\in\mathcal{I},\,t\in\mathcal{T}}\) before the realization
	of the stochastic class-specific demands
	\(\{D_{i,k}^{(t)}\}_{i\in\mathcal{I},\,k\in\mathcal{K},\,t\in\mathcal{T}}\)
	\cite{Chiang2014}. The SInP seeks to maximize the expected profit from
	reservation revenue, net of reserve-maintenance cost and expected SLA
	shortfall loss, while respecting the time-dependent capacity budget,
	request limits, reliability targets, and a minimum per-tenant service floor.
	The resulting planning problem is formulated as
	\begin{subequations}
		\label{prob:joint_pricing_reserve_operator}
		\begin{align}
			\max_{\{p_i,r_i^{(t)}\}} \quad
			& \sum_{t\in\mathcal{T}} \sum_{i\in\mathcal{I}}
			\Big(
			p_i r_i^{(t)} - c\,r_i^{(t)} - w_i\,\mathbb{E}[S_i^{(t)}]
			\Big)
			\label{eq:obj_joint_pricing_reserve_operator}
			\\
			\text{s.t.} \quad
			& \sum_{i\in\mathcal{I}} r_i^{(t)} \le C_{\mathrm{eff}}^{(t)},
			\qquad \forall t\in\mathcal{T},
			\label{eq:cap_joint_pricing_reserve_operator}
			\\
			& 0 \le p_i \le \min_{t\in\mathcal T} \frac{\bar{q}_i^{(t)}}{\kappa_i},
			\qquad \forall i\in\mathcal{I},
			\label{eq:price_joint_pricing_reserve_operator}
			\\
			& 0 \le r_i^{(t)} \le \bar{q}_i^{(t)}-\kappa_i p_i,
			\qquad \forall i\in\mathcal{I},\; \forall t\in\mathcal{T},
			\label{eq:request_link_joint_pricing_reserve_operator}
			\\
			& P_{i,k}^{(t)} \le \varepsilon_{i,k},
			\qquad \forall i\in\mathcal{I},\; \forall k\in\mathcal{K},\; \forall t\in\mathcal{T},
			\label{eq:chance_joint_pricing_reserve_operator}
			\\
			& r_i^{(t)} \ge \underline r_i,
			\qquad \forall i\in\mathcal{I},\; \forall t\in\mathcal{T}.
			\label{eq:floor_joint_pricing_reserve_operator}
		\end{align}
	\end{subequations}
	In~\eqref{eq:obj_joint_pricing_reserve_operator}, the first term is the
	reservation revenue collected from operator tenant \(i\) during window \(t\),
	the second term is the cost of maintaining the admitted reserve, and the
	third term represents the expected financial loss associated with the
	portfolio shortfall \(S_i^{(t)}\) defined in~\eqref{eq:shortfall_operator}. Constraint
	\eqref{eq:cap_joint_pricing_reserve_operator} imposes the time-dependent
	satellite-capacity limit. Constraint
	\eqref{eq:price_joint_pricing_reserve_operator} preserves nonnegative
	reservation requests under the linear price-response model, while
	\eqref{eq:request_link_joint_pricing_reserve_operator} ensures that the
	SInP does not admit more reserves than what is requested at the offered
	price in each window. Constraint
	\eqref{eq:chance_joint_pricing_reserve_operator} enforces the class-specific
	reliability target within each tenant portfolio, with stricter tolerances for URLLC-like traffic than for eMBB-like traffic. Finally, constraint
	\eqref{eq:floor_joint_pricing_reserve_operator} introduces a minimum
	per-tenant service floor $\underline r_i$ to prevent starvation under scarcity.
	
	Problem~\eqref{prob:joint_pricing_reserve_operator} is a finite-horizon
	chance-constrained stochastic nonlinear program. The stochasticity enters
	through the realized class-specific demands across the planning windows,
	while the formulation is generally nonconvex due to the bilinear revenue term
	\(p_i r_i^{(t)}\) and the nonlinear structure induced by the shortfall
	term and the reliability constraints. We next derive a deterministic equivalent of the chance constraints under the Gaussian demand model.
	
	\subsection{Deterministic Reformulation}
	\label{subsec:deterministic}
	
	For each tenant \(i\in\mathcal{I}\), service
	class \(k\in\mathcal{K}\), and planning window \(t\in\mathcal{T}\), the
	reliability constraint \eqref{eq:chance_joint_pricing_reserve_operator} reduces under the Gaussian demand model~\eqref{eq:gaussian_operator} as
	\begin{equation}
		\alpha_{i,k} r_i^{(t)} \ge \mu_{i,k}^{(t)} +
		z_{1-\varepsilon_{i,k}}\sigma_{i,k},
		\qquad \forall i\in\mathcal{I},\; \forall k\in\mathcal{K},\; \forall t\in\mathcal{T},
		\label{eq:det_reliability}
	\end{equation}
	where \(z_{1-\varepsilon_{i,k}}\) is the \((1-\varepsilon_{i,k})\)-quantile of
	the standard normal distribution. Consequently, the expected shortfall admits a closed form. Specifically, define
	\begin{equation}
		\zeta_{i,k}^{(t)} \triangleq
		\frac{\alpha_{i,k} r_i^{(t)}-\mu_{i,k}^{(t)}}{\sigma_{i,k}},
		\qquad \forall i\in\mathcal{I},\; \forall k\in\mathcal{K},\; \forall t\in\mathcal{T},
		\label{eq:zeta_ik}
	\end{equation}
	and let \(\phi(\cdot)\) and \(\Phi(\cdot)\) denote the standard normal density
	and cumulative distribution function. The expected class-level shortfall is
	\begin{align}
		& \mathbb{E}\!\left[\max\{0,D_{i,k}^{(t)}-\alpha_{i,k}r_i^{(t)}\}\right]
		=
		\sigma_{i,k}\phi(\zeta_{i,k}^{(t)}) \nonumber\\
		& \qquad \qquad \qquad +
		\big(\mu_{i,k}^{(t)}-\alpha_{i,k}r_i^{(t)}\big)
		\big[1-\Phi(\zeta_{i,k}^{(t)})\big],
		\label{eq:class_shortfall_closed}
	\end{align}
	and the portfolio shortfall expectation follows as
	\begin{equation}
		\mathbb{E}[S_i^{(t)}]
		= \sum_{k\in\mathcal{K}}
		\gamma_k
		\Big[
		\sigma_{i,k}\phi(\zeta_{i,k}^{(t)})
		+ \big(\mu_{i,k}^{(t)}-\alpha_{i,k}r_i^{(t)}\big)
		\big(1-\Phi(\zeta_{i,k}^{(t)})\big)
		\Big].
		\label{eq:tenant_shortfall_closed}
	\end{equation}
	
	Substituting~\eqref{eq:det_reliability} and~\eqref{eq:tenant_shortfall_closed}
	into the stochastic program yields a deterministic nonlinear program whose
	feasible set is linear. Hence, the transformed planning problem from~\eqref{prob:joint_pricing_reserve_operator} reads
	\begin{subequations}
		\label{prob:deterministic_operator}
		\begin{align}
			\max_{\{p_i,\,r_i^{(t)}\}} \
			& \sum_{t\in\mathcal{T}} \sum_{i\in\mathcal{I}}
			\left(
			p_i r_i^{(t)} - c\,r_i^{(t)} - w_i\,\mathbb{E}[S_i^{(t)}]
			\right)
			\label{eq:obj_det_operator}
			\\
			\text{s.t.} \quad
			& \text{\eqref{eq:cap_joint_pricing_reserve_operator},
				\eqref{eq:price_joint_pricing_reserve_operator},
				\eqref{eq:request_link_joint_pricing_reserve_operator},
				\eqref{eq:floor_joint_pricing_reserve_operator}, and \eqref{eq:det_reliability}.} 
		\end{align}
	\end{subequations}
	Problem~\eqref{prob:deterministic_operator} possesses a linear
	feasible region but a nonconvex objective due to the bilinear term
	\(p_i r_i^{(t)}\) and the nonlinear shortfall components. The next section
	develops a specialized solution algorithm that handles this nonconvexity.
	\vspace{-2mm}
	\section{Proposed Solution}
	\label{sec:proposed_solution}
	
	Problem~\eqref{prob:deterministic_operator} is the deterministic equivalent
	of the original stochastic program under the Gaussian demand model. We solve
	it by an AO scheme that isolates the few
	horizon-wide contract prices $\{p_i\}_{i\in\mathcal I}$ from the per-window
	reserves $\{r_i^{(t)}\}_{i\in\mathcal I,t\in\mathcal T}$. For fixed prices
	the reserve subproblem admits an exact, efficient solution and the prices are
	then refined by a low-dimensional coordinate search.
	
	For each tenant $i$, define $p_i^{\max}\triangleq\min_{t\in\mathcal T}
	\bar{q}_i^{(t)}/\kappa_i$, the upper price bound implied by
	\eqref{eq:price_joint_pricing_reserve_operator}. With prices fixed, the
	optimal reserves are obtained from the value function
	\begin{equation}
		\mathcal F(\{p_i\})\triangleq\;
		\max_{\{r_i^{(t)}\}}\;
		\sum_{t\in\mathcal T}\sum_{i\in\mathcal I}
		\Bigl(p_i r_i^{(t)}-c r_i^{(t)}-w_i\,\mathbb E[S_i^{(t)}]\Bigr),
		\label{eq:value_function}
	\end{equation}
	subject to \eqref{eq:cap_joint_pricing_reserve_operator},
	\eqref{eq:request_link_joint_pricing_reserve_operator},
	\eqref{eq:floor_joint_pricing_reserve_operator}, and
	\eqref{eq:det_reliability}. The outer problem is then
	\begin{equation}
		\max_{\{p_i\}}\;\mathcal F(\{p_i\})
		\quad\text{s.t.}\quad
		\eqref{eq:price_joint_pricing_reserve_operator},
		\label{eq:outer_problem}
	\end{equation}
	which is a low-dimensional nonlinear program. 
	
	For fixed prices, the inner objective is concave in
	$\{r_i^{(t)}\}$ as the revenue term is linear and the expected
	shortfall $\mathbb E[S_i^{(t)}]$ is convex in the reserves. Hence, problem
	\eqref{eq:value_function} decouples across time windows, and for
	each $t\in\mathcal T$, we solve
	\begin{subequations}
		\label{prob:inner_window}
		\begin{align}
			\max_{\{r_i^{(t)}\}_{i\in\mathcal I}} \quad
			& \sum_{i\in\mathcal I}
			\Bigl((p_i-c)r_i^{(t)}-w_i\,\mathbb E[S_i^{(t)}]\Bigr)
			\label{eq:inner_window_obj}
			\\
			\text{s.t.}\quad
			& \eqref{eq:cap_joint_pricing_reserve_operator},
			\qquad
			L_i^{(t)}\le r_i^{(t)}\le \bar q_i^{(t)}-\kappa_i p_i,\;\forall i\in\mathcal I,
			\label{eq:inner_window_box}
		\end{align}
	\end{subequations}
	where, for given prices, the lower bound
	\begin{equation}
		L_i^{(t)}\triangleq\max\Bigl\{
		\underline r_i,\;
		\max_{k\in\mathcal K}
		\Bigl(\beta_{i,k}(\bar q_i^{(t)}-\kappa_i p_i)
		+\frac{z_{1-\varepsilon_{i,k}}\sigma_{i,k}}{\alpha_{i,k}}\Bigr)
		\Bigr\}
		\label{eq:lower_bound_window}
	\end{equation}
	consolidates the minimum service floor and the class-specific SLA margins. The term $\beta_{i,k}(\bar q_i^{(t)}-\kappa_i p_i)$ in \eqref{eq:lower_bound_window} follows from substituting \eqref{eq:request_operator} into \eqref{eq:mean_operator} and rewriting \eqref{eq:det_reliability} as a lower bound on $r_i^{(t)}$. For a fixed price vector, feasibility of the inner reserve-allocation problem requires $L_i^{(t)} \leq \bar q_i^{(t)}-\kappa_i p_i,\quad \forall i,t$ and $\sum_{i\in\mathcal I} L_i^{(t)} \leq C_{\mathrm{eff}}^{(t)},\quad \forall t.$ Candidate price vectors that violate either condition are discarded. The optimality conditions of problem \eqref{prob:inner_window} are expressed through
	the marginal profit
	\begin{equation}
		\psi_i^{(t)}(r)\triangleq
		(p_i-c)+w_i\sum_{k\in\mathcal K}\gamma_k\alpha_{i,k}
		\Bigl[1-\Phi\!\Bigl(
		\frac{\alpha_{i,k}r-\mu_{i,k}^{(t)}}{\sigma_{i,k}}\Bigr)\Bigr],
		\label{eq:psi_window}
	\end{equation}
	which is continuous and nonincreasing in $r$. The second term in $\psi_i^{(t)}(r)$ represents the marginal reduction in the expected SLA shortfall obtained by increasing the admitted reserve. Using the Karush–Kuhn–Tucker (KKT) conditions, the optimal reserve in
	window $t$ satisfies
	\begin{equation}
		r_i^{(t)}=
		\begin{cases}
			L_i^{(t)}, & \psi_i^{(t)}(L_i^{(t)})\le \lambda^{(t)},\\[1mm]
			\bar q_i^{(t)}-\kappa_i p_i, & \psi_i^{(t)}(\bar q_i^{(t)}-\kappa_i p_i)\ge \lambda^{(t)},\\[1mm]
			(\psi_i^{(t)})^{-1}(\lambda^{(t)}), & \text{otherwise},
		\end{cases}
		\label{eq:reserve_characterization}
	\end{equation}
	with a dual variable $\lambda^{(t)}\ge0$ chosen such that
	$\sum_i r_i^{(t)}\le C_{\mathrm{eff}}^{(t)}$ and the complementary slackness
	holds. Since $\psi_i^{(t)}$ is monotone, \eqref{eq:reserve_characterization}
	is solved by bisection on $\lambda^{(t)}$, where each bisection step evaluates
	$\psi_i^{(t)}$ for all $i \in \mathcal{I}$, costing $\mathcal O(|\mathcal I||\mathcal K|)$,
	and the inner-inverse recovery adds a factor $B_r$ representing the bisection
	iteration count for the reserve variable.
	
	The outer problem \eqref{eq:outer_problem} is handled by a deterministic
	coordinate-wise ascent. Specifically, starting from a feasible price vector, one
	coordinate at a time is perturbed within $[0,p_i^{\max}]$, where  the inner problem  for each
	candidate is re‑solved exactly, and only strictly
	improving feasible iterates are accepted. When a full sweep produces no
	improvement, the coordinate step sizes are halved. This procedure is summarized
	in Algo.~\ref{alg:proposed_ao}, which enforces a strict ascent rule
	(Line~\ref{ln:accept}) and shrinks step sizes when the objective becomes
	locally flat (Line~\ref{ln:shrink}), thereby stabilizing convergence near the
	optimum. The inner problem is re‑solved to optimality at every trial price,
	so every accepted iterate is feasible for
	\eqref{prob:deterministic_operator}.
	
	\begin{algorithm}[t]
		\small
		\caption{AO Scheme for Joint Price and Reserve Optimization}
		\label{alg:proposed_ao}
		\DontPrintSemicolon
		\LinesNumbered
		\KwIn{$\bar q_i^{(t)},\kappa_i,\alpha_{i,k},\beta_{i,k},\sigma_{i,k},\varepsilon_{i,k},w_i,\underline r_i$, $\{C_{\mathrm{eff}}^{(t)}\}_{t\in\mathcal T}$, $c$, tolerances $\epsilon_p=10^{-4},\epsilon_{\mathrm{obj}}=10^{-6}$,
			$\Delta_{\min}=10^{-3}$, maximum iterations $M_{\max}=30$, decay factor $\xi=0.5$,
			and price upper bounds $\{p_i^{\max}\}_{i\in\mathcal I}$}
		\KwOut{$\{p_i^\star\}_{i\in\mathcal I}$, $\{r_i^{(t)\star}\}_{i\in\mathcal I,t\in\mathcal T}$}
		Construct a feasible initial price vector $\{p_i^{(0)}\}$ and set initial step sizes $\Delta_i>0$\;
		Solve the inner problem~\eqref{eq:value_function} for $\{p_i^{(0)}\}$; record the objective\nllabel{ln:init}\;
		\For{$m=0,\ldots,M_{\max}-1$}{
			Set incumbent to $\{p_i^{(m)}\},\{r_i^{(t,m)}\}$ and its objective\;
			\ForEach{$i\in\mathcal I$ and direction $s\in\{+1,-1\}$}{
				Form candidate $\tilde p_i = \Pi_{[0,p_i^{\max}]}\!\big(p_i^{(m)}+s\Delta_i\big)$ \nllabel{ln:cand}\;
				Solve inner problem for the candidate prices; evaluate objective \nllabel{ln:solvecand}\;
				\If{feasible and strictly improves incumbent objective}{
					Accept as new incumbent \nllabel{ln:accept}\;
				}
			}
			\If{no candidate accepted}{shrink $\Delta_i\leftarrow\xi\Delta_i,\;\forall i$\nllabel{ln:shrink}}
			Update iterate to accepted incumbent\nllabel{ln:update}\;
			\If{$\|\{p_i^{(m+1)}\}-\{p_i^{(m)}\}\|\le\epsilon_p$ and objective change $\le\epsilon_{\mathrm{obj}}$,
				or $\max_i\Delta_i\le\Delta_{\min}$}{break\nllabel{ln:stop}}
		}
		\Return $\{p_i^\star\},\{r_i^{(t)\star}\}$\;
	\end{algorithm}

	\begin{proposition}
		\label{prop:ao_monotone}
		Every accepted iterate of Algo.~\ref{alg:proposed_ao} is feasible
		for~\eqref{prob:deterministic_operator}, and the objective sequence is
		monotonically nondecreasing.
	\end{proposition}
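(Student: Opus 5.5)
The argument is an induction over the sequence of incumbents produced by Algo.~\ref{alg:proposed_ao}. The invariant to maintain is that the current incumbent $(\{p_i\},\{r_i^{(t)}\})$ is feasible for~\eqref{prob:deterministic_operator} and its objective value equals $\mathcal F(\{p_i\})$. Once this is established, monotonicity follows almost directly from the acceptance rule.

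\emph{Base case.} Line~1 selects a feasible initial price vector $\{p_i^{(0)}\}$, which in particular satisfies~\eqref{eq:price_joint_pricing_reserve_operator}. Feasibility of the inner problem means $L_i^{(t)}\le \bar q_i^{(t)}-\kappa_i p_i^{(0)}$ and $\sum_i L_i^{(t)}\le C_{\mathrm{eff}}^{(t)}$. Under these conditions each window problem~\eqref{prob:inner_window} has a nonempty compact feasible set (a box intersected with a half-space) and a continuous concave objective, so a maximizer exists. The bisection/KKT characterization~\eqref{eq:reserve_characterization} returns this maximizer.

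One point needs checking: the box $[L_i^{(t)},\bar q_i^{(t)}-\kappa_i p_i]$ together with~\eqref{eq:cap_joint_pricing_reserve_operator} encodes exactly the constraints \eqref{eq:cap_joint_pricing_reserve_operator}, \eqref{eq:request_link_joint_pricing_reserve_operator}, \eqref{eq:floor_joint_pricing_reserve_operator}, and~\eqref{eq:det_reliability}. To see this, substitute $\mu_{i,k}^{(t)}=\beta_{i,k}\alpha_{i,k}(\bar q_i^{(t)}-\kappa_i p_i)$ into~\eqref{eq:det_reliability} and divide by $\alpha_{i,k}>0$. This gives one lower bound on $r_i^{(t)}$ per class, and $L_i^{(t)}$ in~\eqref{eq:lower_bound_window} is the maximum of these bounds and $\underline r_i$. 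It also gives $r_i^{(t)}\ge 0$, since $L_i^{(t)}\ge \underline r_i\ge 0$. Hence the pair $(\{p_i^{(0)}\},\{r_i^{(t,0)}\})$ is feasible for~\eqref{prob:deterministic_operator}. Its objective is $\mathcal F(\{p_i^{(0)}\})$, because~\eqref{eq:value_function} decouples across $t$ (the prices are fixed and no constraint links different windows).

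\emph{Inductive step.} Each candidate $\tilde p$ differs from the incumbent in one coordinate only, and that coordinate is the projection $\Pi_{[0,p_i^{\max}]}$. So~\eqref{eq:price_joint_pricing_reserve_operator} holds automatically. A candidate is accepted only if the inner problem is feasible (the two conditions above) and strictly improves the objective. By the same argument as in the base case, an accepted candidate paired with its inner solution is feasible for~\eqref{prob:deterministic_operator}, and its value is $\mathcal F(\tilde p)$, which exceeds the incumbent's value.

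When no candidate is accepted, Line~\ref{ln:shrink} changes only the step sizes, and Line~\ref{ln:update} keeps the incumbent unchanged. The value is then unchanged, which is consistent with a nondecreasing sequence. It follows that every accepted iterate is feasible, the objective sequence (taken within sweeps and across outer iterations) is nondecreasing, and it is strictly increasing at every acceptance. Since the feasible set is compact and the objective is continuous, this sequence is also bounded above; this is not required for the claim, but it is a useful remark.

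\emph{Main obstacle.} The delicate step is the claim that the inner problem is ``solved exactly'', so that the recorded objective really is $\mathcal F(\cdot)$ and the comparison between incumbent and candidate is meaningful. For this I would rely on the stated concavity of~\eqref{eq:inner_window_obj}: the expected shortfall is convex, since its derivative $-\alpha_{i,k}[1-\Phi(\zeta_{i,k}^{(t)})]$ is nondecreasing in $r$. Under concavity the KKT conditions are sufficient, so~\eqref{eq:reserve_characterization} with the appropriate $\lambda^{(t)}$ gives a global maximizer. The $\psi_i^{(t)}$ are continuous and nonincreasing, so $\sum_i r_i^{(t)}(\lambda)$ is nonincreasing in $\lambda$ and bisection finds $\lambda^{(t)}$.

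Strictly speaking, feasibility does not depend on bisection accuracy. The output can be clipped to the box, and if the capacity constraint is slightly violated, $\lambda^{(t)}$ can be taken from the upper end of the bisection bracket, where $\sum_i r_i^{(t)}\le C_{\mathrm{eff}}^{(t)}$ holds. Monotonicity, in turn, uses only the objective values actually evaluated, so it holds regardless of inexactness.
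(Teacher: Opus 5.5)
Your proposal is correct and follows the same route as the paper. Feasibility comes from re-solving the reserve subproblem exactly under the full constraint set for every candidate price, with infeasible price vectors discarded, and monotonicity comes from the strict-improvement rule in Line~\ref{ln:accept}. You also spell out what the paper's three-line proof leaves implicit: that $[L_i^{(t)},\bar q_i^{(t)}-\kappa_i p_i]$ together with \eqref{eq:cap_joint_pricing_reserve_operator} encodes \eqref{eq:request_link_joint_pricing_reserve_operator}, \eqref{eq:floor_joint_pricing_reserve_operator}, and \eqref{eq:det_reliability}, and that concavity makes the KKT/bisection solution globally optimal. Your boundedness remark matches the paper's closing note that the objective sequence converges to a finite limit.
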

	
	\begin{proof}
		Feasibility follows because each candidate is evaluated only after
		exact re‑optimization of the reserves under the full set of constraints.
		Line~\ref{ln:accept} in Algo. \ref{alg:proposed_ao} admits only strict objective improvements, hence
		monotonicity. Since prices are bounded by $p_i^{\max}$ and reserves by
		\eqref{eq:request_link_joint_pricing_reserve_operator} and
		\eqref{eq:floor_joint_pricing_reserve_operator}, the monotone objective sequence converges to a finite limit.
	\end{proof}
	The complexity of one exact inner solution over all windows is
	$\mathcal O(|\mathcal T|\,|\mathcal I|\,|\mathcal K|\,B_r B_{\lambda})$,
	where $B_r$ and $B_{\lambda}$ are the bisection iteration counts for the
	reserve and dual variables, respectively. Each outer iteration tests at most
	$2|\mathcal I|$ price candidates, yielding an overall polynomial
	per-iteration complexity of
	$\mathcal O(M_{\max}|\mathcal I|^2|\mathcal T||\mathcal K|B_r B_{\lambda})$, where $M_{\max}$ is the maximum allowed iterations. 
	
	\vspace{-3mm}
	\section{Numerical Analysis}
	\label{sec:numerical}
	
	\subsection{Simulation Setup}
	\label{subsec:setup}
	
	We consider a heterogeneous system with \(I\) tenants and two traffic
	classes \(\mathcal K=\{u,b\}\) per tenant.  The horizon comprises
	\(T=14\) aggregation windows of duration \(\Delta t=1\) minute. The effective
	capacity \(C_{\mathrm{eff}}^{(t)}\) follows a fixed temporal pattern that
	is scaled by a factor \(\rho\) to create different resource regimes, while
	baseline demand \(\bar q_i^{(t)}\) varies across both tenants and time
	windows.  Portfolio weights satisfy \(\alpha_{i,u}\in[0.30,0.70]\) and
	\(\alpha_{i,b}=1-\alpha_{i,u}\).
	The price sensitivity and shortfall penalty weight are linked to the
	portfolio composition via \(\kappa_i=0.5+0.10\,\alpha_{i,u}\) and
	\(w_i=1.0+0.5\,\alpha_{i,u}\), yielding \(\kappa_i\!\in\![0.53,0.57]\)
	and \(w_i\!\in\![1.15,1.35]\).  Table~\ref{tab:sim_params} lists the
	remaining parameters used in the simulation. Finally, all results are obtained from \(N=50{,}000\) independent demand
	realizations and the SLA margin in the deterministic reformulation is
	inflated by \(1.05\) to provide a small conservatism buffer.
	
	\begin{table}[t]
		\centering
		\caption{Simulation parameters}
		\label{tab:sim_params}
		\small
		\begin{tabular}{l c}
			\toprule
			Parameter & Value / Range \\
			\midrule
			Mean consumption ratio, \(\beta_{i,k}\) & \(0.6\)\,--\,\(0.9\) \\
			& (class \& tenant dependent)\\
			Demand standard deviation, \(\sigma_{i,k}\) & \(0.8\)\,--\,\(1.5\) \\
			URLLC SLA target, \(\varepsilon_{i,u}\) & \(10^{-3}\) \\
			eMBB SLA target, \(\varepsilon_{i,b}\) & \(5\times10^{-2}\) \\
			Unit maintenance cost, \(c\) & \(0.8\) \\
			Class severity weights, \((\gamma_u,\gamma_b)\) & \((4,1)\) \\
			Minimum reservation floor, \(\underline r_i\) & \(0.75\,\bar C_{\mathrm{eff}}/I\) \\
			\bottomrule
		\end{tabular}
	\end{table}
	\vspace{-2mm}
	\subsection{Performance of the Proposed Scheme}
	\label{subsec:performance}
	
	To evaluate the performance of the AO-based \textit{Proposed} scheme (Algo.~\ref{alg:proposed_ao}), we compare it
	against five baselines, namely,  \emph{Optimal} that denotes the Gurobi-based solution
	of the deterministic program~\eqref{prob:deterministic_operator},
	\emph{No-Floor}, which removes the minimum reservation requirement
	\(\underline r_i\) while keeping all other constraints unchanged, thus 
	isolating the impact of fairness enforcement,  \emph{Uniform-Price} that 
	applies a common price to all tenants and allocates reserves
	proportionally to the induced requests  subject to capacity and floor
	constraints, \emph{Pricing-Only}, inspired from~\cite{Raftopoulos2024}, which 
	optimizes the prices \(\{p_i\}\) while holding reserves at the minimum
	levels that satisfy reliability and floor conditions, and finally,     \emph{Reserve-Only}
	fixes prices over the horizon and optimizes only the reserve variables
	\(\{r_i^{(t)}\}\) under the same system constraints.
	
	\subsubsection{Profit Performance}
	
	Fig.~\ref{profit} illustrates the profit performance as a function of the number of tenants $I$ and the capacity scaling factor $\rho$. In Fig.~\ref{profit}(left), profit increases nearly linearly with $I$ for all schemes. This behavior follows from the objective in \eqref{prob:deterministic_operator}, where additional tenants contribute additional revenue terms $p_i r_i^{(t)}$ across all windows, while capacity pooling enables more efficient utilization and partial mitigation of shortfall penalties. The \textit{Proposed} scheme closely tracks the \textit{Optimal} benchmark across all system sizes, with a maximum optimality gap of approximately $1.26\%$ at $I=2$, which reduces to below $0.33\%$ for larger $I$. In contrast, baselines exhibit increasing degradation as system size grows. For instance, 
	\textit{Uniform-Price} and \textit{Pricing-Only}, which
	cannot tailor reserves to tenant-specific risk profiles, leave up to
	\(16.6\%\) and $13.6\%$ of potential profit unrealized, respectively. Even \textit{Reserve-Only}, which retains temporal flexibility but abandons price
	differentiation, loses roughly \(4.5\%\). The \textit{No-Floor} variant marginally exceeds the constrained benchmarks because it relaxes the minimum-reservation fairness requirement. These results show that the advantage of the \textit{Proposed} scheme lies in the
	synergy of tenant-differentiated pricing and time-adaptive reserve
	allocation, not in using either mechanism alone.
	
	This near-optimal tracking persists under capacity variation in Fig.~\ref{profit} (right). Profit generally increases with $\rho$ as higher effective capacity allows larger admitted reserves and reduces expected shortfall penalties, thus improving the SInP's net revenue. However, a mild non-monotonic behavior is observed around $\rho=0.9$, which arises from the interaction between time-varying demand and capacity profiles, as discussed in Section~\ref{sec:system}. Despite this effect, the \textit{Proposed} scheme consistently achieves near-optimal performance across all regimes, with a maximum deviation of approximately $1.38\%$ from the \textit{Optimal}. This indicates that AO-based joint optimization effectively adapts pricing and reserve decisions to varying resource availability, enabling the SInP to retain near-optimal profit while controlling reliability-induced penalties.
	
	\begin{figure}[tb]
		\centering
		\includegraphics[scale=0.33]{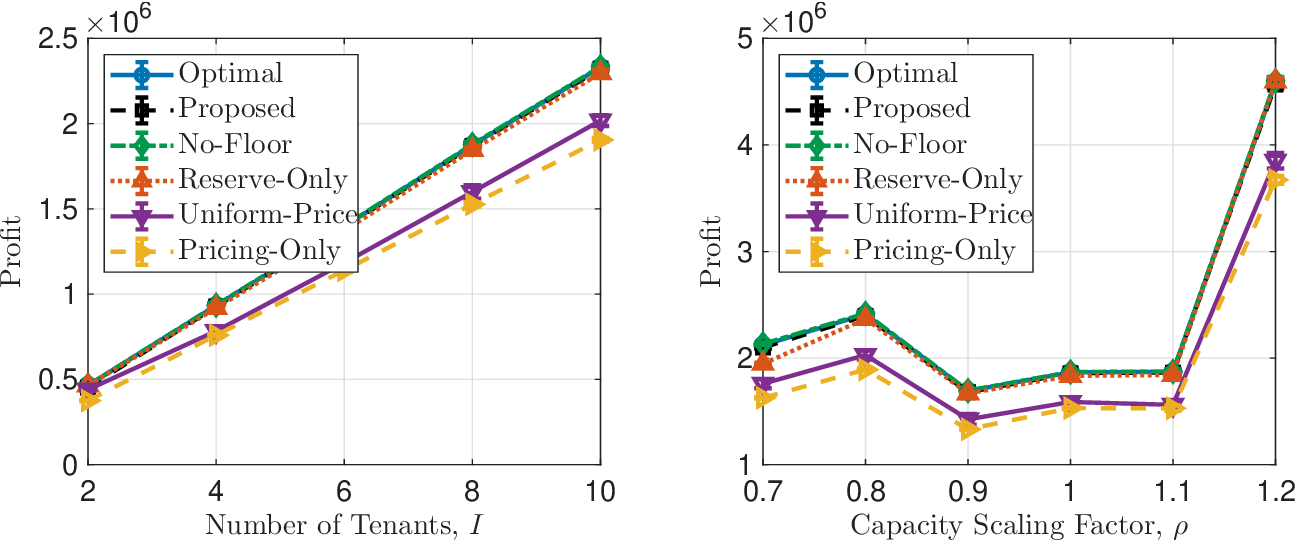}
		\caption{Profit performance versus (left) number of tenants with \(\rho = 1\) and
			(right) capacity scaling factor with $I = 8$. Error bars denote 95\% confidence intervals across independent feasible scenario realizations.}
		\label{profit}
	\end{figure}
	
	\subsubsection{SLA and Utilization Performance}
	
	Fig.~\ref{SLA_Utility} (left) depicts the maximum empirical SLA gap defined as 
	$\max_{i,k,t}\bigl(\hat P_{i,k}^{(t)}-\varepsilon_{i,k}\bigr)$, where nonpositive values indicate that all reliability constraints are satisfied empirically. The \textit{Proposed} scheme maintains a constant gap of approximately $-10^{-3}$ across all capacity regimes, identical to the \textit{Optimal} and \textit{Pricing-Only} schemes. This invariance follows from the deterministic reliability constraints, which remain binding across $\rho$ and guarantee that admitted reserves keep violation probabilities at or below the target levels. Conversely, the \textit{Uniform-Price} scheme violates the SLA for all
	$\rho$, with the maximum empirical gap rising by over $800\%$ across
	the examined capacity regimes,
	since uniform pricing cannot regulate tenant-specific demand under
	relaxed capacity, leading to systematic over-admission. The other
	baselines stay marginally feasible but operate close to the boundary, confirming that joint pricing and reserve decisions are essential for reliability.
	
	Fig.~\ref{SLA_Utility} (right) presents the average utilization performance $\bar U = \frac{1}{T}\sum_{t\in\mathcal T}\frac{\sum_i r_i^{(t)}}{C_{\mathrm{eff}}^{(t)}}$ for varying $\rho$ with fixed $I = 8$. The \textit{Proposed} scheme achieves consistently high utilization between $0.95$ and $0.99$, closely tracking the \textit{Optimal} solution within $2.1\%$. As $\rho$ increases, $\bar U$ dips slightly then stabilizes because higher
	capacity relaxes packing pressure while price-limited demand caps
	reserve growth. \textit{Pricing-Only} maintains a constant utilization of $0.76$, wasting up to $29.8\%$ of capacity due to the absence of adaptive reserve control. The \textit{Reserve-Only} scheme reaches near-full utilization under scarcity but falls to $0.86$ at high $\rho$, reflecting misalignment between fixed pricing and evolving demand. In contrast, \textit{Uniform-Price} exceeds $0.97$ yet incurs severe SLA violations, indicating that high utilization alone does not imply reliable resource allocation. Consequently, the \textit{Proposed} scheme effectively balances high utilization and strict SLA compliance, enabling the SInP to exploit available capacity while avoiding reliability-induced penalties across varying scarcity conditions. In addition, Jain's fairness index remains close to one for all methods, indicating little inter-tenant disparity under the tested conditions.
	\begin{figure}[tb]
		\centering
		\includegraphics[scale=0.33]{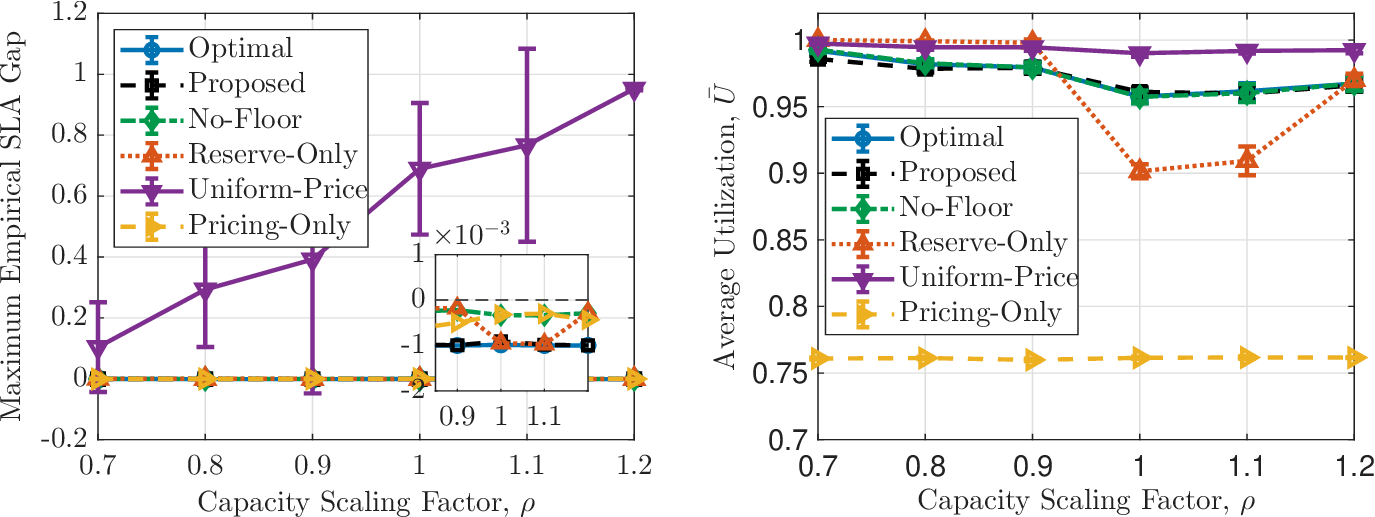}
		\caption{Effect of capacity scaling factors on SLA and utilization with $I = 8$.} 
		\label{SLA_Utility}
	\end{figure}
	
	\subsubsection{Runtime Performance}
	
	Fig.~\ref{runtime} compares the computational runtime of the \textit{Proposed} scheme and the \textit{Optimal} benchmark as the problem size increases in terms of $I$ and $T$ with fixed $\rho = 1$. Compared to \textit{Optimal}, the \textit{Proposed} method exhibits a maximum runtime gap of $22\times$ (more than $95\%$ in runtime reduction). This significant reduction reflects the
	structural decomposition highlighted in Section~\ref{sec:proposed_solution},
	where the AO scheme factorizes the joint optimization into tractable
	subproblems. Minor non-monotonic variation in runtime at larger instances is
	attributable to solver convergence effects rather than algorithmic scaling. In contrast, the \textit{Optimal} benchmark solves the same joint problem via Gurobi’s
	spatial branch-and-bound on a non-convex mixed-integer quadratic program
	with piecewise-linear approximations (i.e., with $M$ breakpoints), leading to a
	worst-case complexity of $\mathcal{O}\!\left((M{-}1)^{IKT} \cdot
	2^{IT} \cdot (IKTM)^{2.5}\right)$. With the near-optimal profit performance in Fig.~\ref{profit} and strict SLA compliance with high utilization in Fig.~\ref{SLA_Utility}, these results prove that the \textit{Proposed} scheme achieves a favorable balance between economic efficiency, reliability, and scalability, making it suitable for practical finite-horizon multi-tenant satellite O-RAN resource orchestration.
	\begin{figure}[tb]
		\centering
		\includegraphics[scale=0.29]{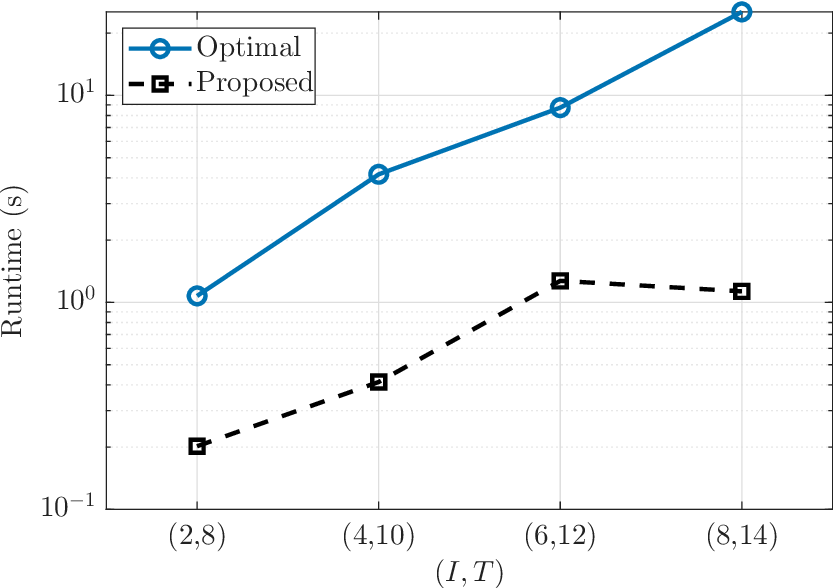}
		\caption{Runtime comparison across different $I$ and $T$.}
		\label{runtime}
	\end{figure}
	\vspace{-2mm}
	\section{Conclusion}
	This work addressed the challenge of coordinating pricing and resource allocation in a satellite O-RAN wholesale environment under stochastic demand and limited time-varying capacity. The problem is formulated to capture the interaction between price-sensitive tenant demand, reliability constraints expressed through SLAs, and shared resource limitations faced by the SInP. 
	The proposed framework that solves this problem integrates pricing and reserve allocation within a unified optimization model, enabling the SInP to jointly manage revenue generation and SLA compliance. Our design leveraged a deterministic representation of probabilistic reliability constraints, which maintains tractability while ensuring robustness against demand variability. Moreover, by structuring our solution through alternating updates, the framework efficiently handled the coupling between decision variables without requiring global optimization. Our numerical results demonstrated consistent near-optimal profit, strict reliability satisfaction, and efficient utilization across varying capacity regimes. Compared with benchmarks that isolate pricing or allocation, our solution avoided systematic resource over-allocation or under-utilization, particularly under scarce resource conditions. 
	Finally, we proved the robust scalability of our solution, thus highlighting its suitability for SInP-side orchestration in shared satellite systems.
	
	
	
	\vspace{-2mm}
	\balance
	\bibliographystyle{IEEEtran}
	\bibliography{Ref_2}
\end{document}